\numberwithin{equation}{section} 
\let\start@align@nopar\start@align
\let\start@gather@nopar\start@gather
\let\start@multline@nopar\start@multline
\long\def\start@align{\par\start@align@nopar}
\long\def\start@gather{\par\start@gather@nopar}
\long\def\start@multline{\par\start@multline@nopar}
\def\ie{\rm{{\it i.e.\ }}}
\def\eg{\rm{{\it e.g.\ }}}
\newcommand{\M}{\mathcal{M}}
\newcommand{\D}{\mathcal{D}}
\renewcommand{\L}{\mathcal{L}}
\newcommand{\R}{\mathbb{R}}
\newcommand{\bs}{\boldsymbol}
\newcommand{\btheta}{\boldsymbol\theta}
\newtheorem{theorem}{Theorem}
\theoremstyle{definition}
\theoremstyle{remark}
\journal{Unknown}
\begin{document}

\begin{frontmatter}

\title{\small{\color{red} Please cite as: F.A. DiazDelaO, A. Garbuno-Inigo, S.K.
Au, I. Yoshida, Bayesian updating and model class selection with Subset
Simulation, {\em Computer Methods in Applied Mechanics and Engineering}, Volume
317, 2017}\\Bayesian updating and model class selection with Subset Simulation}

\author[liv]{F.~A.~DiazDelaO \corref{cor1}}
\author[liv]{A.~Garbuno-Inigo}
\author[liv]{S.~K.~Au}
\author[tok]{I.~Yoshida}

\cortext[cor1]{Corresponding author}

\address[liv]{Institute for Risk and Uncertainty, School of Engineering,
 University of Liverpool\\ Brownlow Hill, Liverpool L69 3GH, United Kingdom.}

\address[tok]{Department of Urban and Civil Engineering, Tokyo City University.
 \\1-28-1 Tamazutsumi Setagaya-ku, Tokyo 158-8557, Japan.}

\begin{abstract}

Identifying the parameters of a model and rating competitive models based on
measured data has been among the most important and challenging topics in modern
science and engineering, with great potential of application in structural
system identification, updating and development of high fidelity models. These
problems in principle can be tackled using a Bayesian probabilistic approach,
where the parameters to be identified are treated as uncertain and their
inference information are given in terms of their posterior probability
distribution. For complex models encountered in applications, efficient
computational tools robust to the number of uncertain parameters in the problem
are required for computing the posterior statistics, which can generally
be formulated as a multi-dimensional integral over the space of the uncertain
parameters. Subset Simulation has been developed for solving reliability
problems involving complex systems and it is found to be robust to the number of
uncertain parameters. An analogy has been recently established between a
Bayesian updating problem and a reliability problem, which opens up the
possibility of efficient solution by Subset Simulation. The formulation, called
BUS (Bayesian Updating with Structural reliability methods), is based the
standard rejection principle. Its theoretical correctness and efficiency
requires the prudent choice of a multiplier, which has remained an open
question. This paper presents a fundamental study of the multiplier and
investigates its bias effect when it is not properly chosen. A revised
formulation of BUS is proposed, which fundamentally resolves the problem such that Subset Simulation can be 
implemented without knowing the multiplier a priori. An automatic stopping condition is also provided. 
Examples are presented to illustrate the theory and applications.

\end{abstract}

\begin{keyword}
Bayesian inference \sep BUS \sep Subset Simulation \sep Markov Chain Monte
Carlo \sep model updating
\end{keyword}

\end{frontmatter}

\section{Introduction} 

Making inference about the parameters of a mathematical model based on observed
measurements of the real system is one of the most important problems in modern
science and engineering. The Bayesian approach provides a fundamental means to
do this in the context of probability logic
\citep{Malakoff1999,Richard1961,Jaynes2003}, where the parameters are viewed as
uncertain variables and the inference results are cast in terms of their
probability distribution after incorporating information from the observed data.
In engineering dynamics, for example, vibration data from a structure is
collected from sensors and used for identifying the modal properties (\eg
natural frequencies, damping ratios, mode shapes) and structural model
properties (\eg stiffness, mass) \citep{Hudson1977,Ewins2000}. This has been
formulated in a Bayesian context \citep{Beck1998,Beck2010}, which
resolved a number of philosophically challenging issues of the inverse problem,
such has the treatment of multiple sets of parameters giving the same model fit
to the data, an issue known as {\it identifiability}.

Let $\Theta \in \mathbb{R}^n$ be a set of parameters of a model $\M$,
based on which a probabilistic prediction of the data $\D$ can be
formulated through the likelihood function $P(\D| \bs
\theta, \M)$. Clearly, the probability distribution of $\Theta$ depends
on the available information. Based only on knowledge in the context of
$\M$, the distribution is described by the prior distribution
$P(\bs\theta | \M)$. When data about the system is available,
it can be used to update the distribution. Using Bayes' Theorem, the posterior distribution that 
incorporates the data information in the context of
$\M$ is given by
\begin{align}
P(\bs\theta | \D, \M ) = P(\D|\M)^{-1} \, P(\D| \bs \theta, \M) \, P(\bs\theta | \M), \label{eq:posterior}
\end{align}

\noindent where 
\begin{align}
 P(\D|\M) = \int_{\bs\Theta} P(\D| \bs \theta, \M) \, P(\bs\theta | \M) \, d\bs\theta, \label{eq:normalising}
\end{align}

\noindent is a normalizing constant. Future predictions of a response quantity
of interest, say $r(\bs\theta)$, can be updated by incorporating data
information, through the posterior expectation \citep{Papadimitriou2001}:
\begin{align}
E[r(\bs\theta | \D, \M)] = \int r(\bs\theta) \, P(\bs\theta | \D, \M) \, d\bs\theta. 
\label{eq:expectation_model_data}
\end{align}

As far as the posterior distribution of $\bs\theta$ for a given model $\M$ is
concerned, the constant in Eq. \eqref{eq:normalising} is immaterial because it
does not change the distribution. However, It is the primary quantity of study
in Bayesian model class selection problems where competing models are compared
based on the value of $P(\M) P(\D|\M)$ \citep{Carlin1995,Chen2012,Beck2004a}. In
that context, $P(\D|\M)$ is often called the {\it evidence} (the higher the
better).

Capturing efficiently essential information about the posterior distribution,
\ie posterior statistics, and calculating the posterior expectation is a non-
trivial problem, primarily resulting from the complexity of the likelihood
function. In many applications, the likelihood function is only implicitly
known, \ie its value can be calculated point-wise but its dependence on the
model parameters is mathematically intractable. This renders analytical
solutions infeasible and conventional numerical techniques inapplicable. In this
case, Markov Chain Monte Carlo (MCMC)
\citep{Metropolis1953,Hastings1970,Robert2004,Fishman1996a} is found to provide
a powerful computational tool. MCMC allows the samples of an arbitrarily given
distribution to be efficiently generated as the samples of a specially designed
Markov chain. In MCMC, candidate samples are generated by a {\it proposal
distribution} (chosen by the analyst) and they are adaptively accepted based on
ratios of the target distribution value at the candidate and the current sample.

While MCMC in principle provides a powerful solution for Bayesian computation,
difficulties are encountered in applications, motivating different variants of
the algorithm. For example, in problems with a large amount of data, the
posterior distribution takes on significant values only in a small region of the
parameter space, whose size generally shrinks in an inverse square root law with
the data size. Depending on sufficiency or relevance of the data for the model
parameters, the regions of significant probability content can be around a set
of isolated points (globally or locally identifiable) or a lower dimensional
manifold (unidentifiable) with non-trivial geometry
\citep{Katafygiotis1998,Katafygiotis2002a}. To the least extent this causes
efficiency problems, making the choice of the proposal distribution difficult
and leading to high rejection rate of candidates and hence poor efficiency. When
the issue is not managed, significant bias can result in the statistical
estimation based on the samples. Strategies similar to simulated annealing have
been proposed to convert the original difficult updating problem effectively
into a sequence of more manageable problems with less data, thereby allowing the
samples to adapt gradually \citep{Beck2002,Cheung2010,Ching2007}. Another issue
is {\it dimension sustainability}, \ie whether the algorithm remains applicable
when the number of variables (\ie dimension) of the problem increases. This
imposes restrictions on the design of MCMC algorithms so that quantities such as
the ratio of likelihood functions involved in the simulation process do not {\it
degenerate} as the dimension of the problem increases.

Application robustness and dimension sustainability are well-recognized in the
engineering reliability method literature
\citep{Au2003,Schueeller2004,Katafygiotis2008}. In this area, the general
objective is to determine the failure probability that a scalar response of
interest exceeds a specified threshold value, or equivalently to determine its
complementary cumulative distribution function (CCDF) near the upper tail (\ie
large thresholds). Subset Simulation (SuS) \citep{Au2001,Au2014} has been
developed as an advanced Monte Carlo strategy that is efficient for small
failure probabilities (rare events) but still retain a reasonable robustness
similar to the Direct Monte Carlo method. In SuS, samples conditional on a
sequence of intermediate failure events are generated by MCMC and they gradually
populate towards the target failure region. These {\it conditional samples}
provide information for estimating the whole CCDF of the response quantity of
interest. SuS typically does not make use of any problem-specific information,
treating the input-output relationship between the response and the uncertain
parameters as a {\it black box}. Based on an independent-component MCMC
strategy, it is applicable for an arbitrary (potentially infinite) number of
uncertain variables in the problem.

By establishing an analogy with the reliability problem that SuS is originally
designed to solve, it is possible to adapt SuS to provide an efficient solution
for another class of problems. For example, by considering an {\it augmented
reliability problem} where deterministic design parameters are artificially
considered as uncertain, SuS has been applied to investigate the sensitivity of
the failure probability with respect to the design parameters and their optimal
choice without repeated simulation runs
\citep{Au2005,Ching2007b,Song2009,Taflanidis2009a}. Another example can be found
in constrained optimization problems, where an analogy was established between
rare failure events in reliability problems and extreme events in optimization
problems, allowing SuS to be applied to solving complex problems with nonlinear
objective functions and potentially a large number of inequality constraints and
optimization variables \citep{Li2010,Qi2011}.

In view of the application robustness and dimension sustainability, it would be
attractive to adapt SuS for Bayesian computations. This is not trivial since the
problem contexts are different. One major difference is that in the reliability
problem the uncertain parameters follow standard classes of distributions (\eg
Gaussian, exponential) specified by the analyst; while in the Bayesian updating
problem the uncertain parameters follow the posterior distribution, which
generally does not belong to any standard distribution because the likelihood
function is problem-dependent.

Recent developments have shown promise for adapting SuS to Bayesian updating
problems. In the context of Approximate Bayesian Computation (ABC),
\cite{Chiachio2014} built an analogy with the reliability problem so that the
posterior samples in the Bayesian updating problem can be obtained as the
conditional samples in SuS at the highest simulation level determined by a
tolerance parameter that gradually diminishes. The latter controls the
approximation of the likelihood function through a proximity model (a feature of
ABC) between the measured and simulated data for a given value of model
parameter.

Along another line of thought, \cite{Straub2014} recently provided a formulation
called BUS (Bayesian Updating using Structural reliability methods) that opens
up the possibility of Bayesian updating using SuS. It combined an
earlier idea \citep{Straub2011} with the standard rejection principle to
establish an analogy between a Bayesian updating problem and a reliability
problem, or more correctly a {\it probabilistic failure analysis} problem
\citep{Au2003,Au2004,Au2014}. Through the analogy, the samples following the
posterior distribution in the Bayesian updating problem can be obtained as the
conditional samples in the reliability problem. Unlike ABC, the formulation is
exact as it respects fully the original likelihood function; and in this sense
it is more fundamental. One outstanding problem, however, is the choice of the
{\it likelihood multiplier}, or {\it multiplier} in short, in the context of
rejection principle. To guarantee the theoretical correctness of the analogy, it
must be less than the reciprocal of the maximum value of the likelihood
function, which is generally unknown especially before the problem is solved.
Some suggestions have been given in \cite{Straub2014} based on inspection of the
likelihood function. An adaptive choice was suggested based empirically on the
generated samples \citep{Betz2014}. It is more robust to applications as it
does not require prior input from the analyst. It offers no guarantee on
correctness, however, due to the incomplete nature of finite sampling
information which seems inevitable. The problem with the choice of the
multiplier remains open.

This work is motivated by the choice of the multiplier and more fundamentally
its mathematical and philosophical role in the BUS formulation. A rigorous
mathematical study is carried out to provide fundamental understanding of the
multiplier, which leads to a revised BUS formulation allowing SuS to be
implemented independent of the choice of the multiplier and convergence of
results to be checked formally. Essentially, by defining the failure event in
the BUS formulation, we show that SuS can in fact be implemented {\it without
the multiplier} and the samples beyond a certain simulation level all have the
same target posterior distribution.

This paper is organized as follows. We first give an overview of SuS and the original BUS formulation. 
The mathematical role of the multiplier and its bias effect arising from inappropriate choice are then
investigated. A revised formulation is then proposed and associated theoretical
issues are investigated, followed by a discussion on the application of SuS under
the revised formulation. Examples are presented to explain the theory and
illustrate its applications.

\section{Subset Simulation} \label{sec:sus}

We first briefly introduce Subset Simulation (SuS) to facilitate understanding
its application in the context of Bayesian model updating and model class
selection later. SuS is an advanced Monte Carlo method for reliability and
failure analysis of complex systems, especially for rare events. It is based on
the idea that a small failure probability can be expressed as a product of
larger conditional failure probabilities, effectively converting a rare
simulation problem into a series of more frequent ones.

\subsection{Reliability and failure analysis problem}

Despite the variety of failure events in applications, they can often be
formulated as the exceedance of a critical response over a specified threshold.
Let $Y = h(\bs\theta)$, be a scalar response quantity of interest that depends
on the set of uncertain parameters $\bs\theta$ distributed as the parameter
probability density function (PDF) $q(\bs\theta)$. The function $h(\cdot)$
represents the relationship between the uncertain input parameters and the
output response. The parameter PDF $q(\cdot)$ is specified by the analyst from
standard distributions. Without loss of generality, the uncertain parameters are
assumed to be continuous-valued and independent, since discrete-valued
variables or dependent variables can be obtained by mapping continuous-valued
independent ones.

The primary interest of reliability analysis is to determine the {\it failure
probability} $P(Y>b)$ for a specified threshold value $b$:
\begin{align}
P(Y>b) = \int q(\bs\theta) \, I(\bs\theta \in F) \, d\bs\theta, 
\end{align}

\noindent where
\begin{align}
F = \{Y > b\} = \{\bs\theta \in \R^n : h(\bs\theta) > b \}, 
\end{align}

\noindent denotes the failure event or the failure region in the parameter space,
depending on the context; $I(\cdot)$ is the indicator function, equal to 1 if
its argument is true and zero otherwise. Probabilistic failure analysis on the
other hand is concerned with what happens when failure occurs, which often
involves investigating the expectation of some response quantity $r(\bs\theta)$
(say) conditional on the failure event, \ie
\begin{align}
E[r(\bs\theta)|F] = \int r(\bs\theta) \, q(\bs\theta|F) \, d\bs\theta, 
\end{align}

\noindent where
\begin{align}
q(\bs\theta| F) = P_F^{-1} \, q(\bs\theta) \, I(\bs\theta \in F), 
\end{align}

\noindent is the PDF of  conditional on failure. 

When the relationship between $Y$ and $\bs\theta$, \ie the function $h(\cdot)$,
is complicated, analytical or conventional numerical integration is not feasible
for computing $P(Y>b)$ or $E[r(\bs\theta)|F]$ and thus advanced computational
methods are required for their efficient determination. SuS offers an efficient
solution by generating a sequence of sample populations of $\bs\theta$
conditional on increasingly rare failure events $\{Y > b_i\}$, where $\{b_i :
1,2,\ldots\}$ is an increasing sequence of threshold values adaptively
determined during the simulation run. These {\it conditional samples} provide
information for estimating the CCDF of $Y$, \ie  $P(Y>b)$ versus $b$ from the
frequent (left tail) to the rare (right tail) regime. When the right tail covers
the threshold value associated with the target failure event, the required
failure probability can be obtained from the estimate of the CCDF. The
conditional samples can also be used for estimating the conditional expectation
in probabilistic failure analysis, a feature not shared by conventional variance
reduction techniques. As we shall see in the next section, the conditional
samples provide the posterior samples required for Bayesian model updating. The
failure probability provides the information for estimating the evidence for
Bayesian model class selection.

\subsection{Subset Simulation procedure}

A typical SuS algorithm is presented as follows \citep{Au2001,Au2014}. Two
parameters should be set before starting a simulation run: 1) the {\it level
probability} $p_0 \in (0,1)$ and 2) the {\it number of samples per level} $N$.
It is assumed that $p_0N$ and $p_0^{-1}$ are positive integers. As will be seen
shortly, these are respectively equal to the number of chains and the number of
samples per chain at a given simulation level. In the reliability literature, a
prudent choice is $p_0 = 0.1$. The number of samples $N$ controls the
statistical accuracy of results (the higher the better), generally in an inverse
square root manner. Common choice ranges from a few hundreds to over a thousand,
depending on the target failure probability.

A simulation run starts with Level 0 (unconditional), where $N$ i.i.d.
(independent and identically distributed) samples of $\bs\theta$ are generated
from $q(\cdot)$, \ie direct Monte Carlo. The corresponding values of $Y$ are
computed and arranged in ascending order, giving an ordered list denoted by
$\{b_k^{(0)}: {\color{black}k = 1, \ldots, N\}}$. The value $b_k^{(0)}$ gives the estimate of $b$
corresponding to the exceedance probability $p_k^{(0)} = P(Y > b)$ where
\begin{align}
p_k^{(0)} = \frac{N-k}{N}, & & k = 1, \ldots, N.
\end{align}

\noindent The next level, \ie Level 1, is conditional on the intermediate
failure event $\{Y > b_1\}$, where $b_1$ is determined as the $(p_0N+1)$-th
largest sample value of $Y$ at Level 0, \ie
\begin{align}
b_1 = b_{N(1-p_0)}^{(0)}. 
\end{align}
	
\noindent By construction, the $p_0N$ samples of $\bs\theta$ corresponding to
$\{b_{N(1-p_0)+j}^{(0)}: j = 1, \ldots, p_0N\}$ are conditional on $\{Y>b_1\}$.
These conditional samples are used as {\it seeds} for generating additional
samples conditional on $\{Y>b_1\}$ by means of MCMC. A MCMC chain of $p_0^{-1}$
samples is generated from each seed, giving a total population of $p_0N \times
p_0^{-1} = N$ samples conditional on $\{Y>b_1\}$ at Level 1.

During MCMC the values of $Y$ of the conditional samples at Level 1 have been
calculated. They are arranged in ascending order, giving the ordered list
denoted by $\{b_1^{(1)}: k = 1, \ldots, N\}$. The value $b_k^{(1)}$ gives the
estimate of $b$ corresponding to exceedance probability $p_k^{(1)} = P(Y> b)$
where
\begin{align}
p_k^{(1)} = p_0 \, \frac{N-k}{N}, & & k = 1, \ldots, N.
\end{align}

\noindent The next level, \ie Level 2, is conditional on $\{Y > b_2\}$ where
$b_2$ is determined as the $(p_0N+1)$-th largest sample value of $Y$ at Level 1,
\ie
\begin{align}
b_2 = b_{N(1-p_0)}^{(1)}. 
\end{align}

The above process of generating additional MCMC samples and moving up simulation
levels is repeated until the target threshold level or probability level has
been reached. In general, at Level $i$ ($i=1, \ldots, N$), in the ordered list
of sample values of $Y$ denoted by $\{b_k^{(i)}: k = 1, \ldots, N\}$, the value
$b_k^{(i)}$ gives the estimate of $b$ corresponding to exceedance probability
$p_k^{(i)} = P(Y> b)$ where
\begin{align}
p_k^{(i)} = p_0^i \, \frac{N-k}{N}, & & k = 1, \ldots, N.
\end{align}

Several features of SuS are worth-mentioning. It is population-based in the
sense that the samples at a given level are generated from multiple $(p_0N)$
chains, making it robust to ergodic problems. An independent-component MCMC
algorithm is used, which is the key to be sustainable for high dimensional
problems \citep{Au2001,Schueeller2004,Haario2005}. The conditional samples at
each level all have the target conditional distribution and there is no {\it
burn-in} problem commonly discussed in the MCMC literature. This is because the
MCMC chains are all started with a seed distributed as the target distribution
(conditional on that level), and so they are stationary right from the start.

Variants of the SuS algorithm have been proposed to improve efficiency, \ie
\cite{Papadopoulos2012,Zuev2011,Bourinet2011}. See also the review in Section
5.9 of \cite{Au2014}. The algorithm can even be implemented as a VBA (Visual
Basic for Applications) Add-In in a spreadsheet \citep{Au2010,Wang2010a}.

\section{BUS formulation}

In this section we briefly review the BUS formulation \citep{Straub2014,Straub2016} that
builds an analogy between the Bayesian updating problem and a reliability
problem, thereby allowing SuS to be applied to the former. For mathematical
clarity and to simplify notation, in the Bayesian updating problem we use
$q(\btheta)$ to denote the prior PDF $p(\btheta)$, $\L(\btheta)$ to denote the
likelihood function $p(\btheta | \D,\M)$, $P_{\D}$ to denote the normalizing
constant $P(\D|\M)$, and $p_{\D}(\btheta)$ to denote the posterior PDF. The
same symbol $q(\btheta)$ is used for the prior PDF in the Bayesian updating
problem and the parameter PDF in the reliability problem, as it has the same
mathematical property (chosen from standard distributions by the analyst) and
role (the distribution to start the SuS run) in both problems. In a Monte Carlo
approach the primary target in Bayesian model updating is to generate samples
according to the posterior PDF $p_{\D}(\btheta)$ (rewritten from
\eqref{eq:posterior}):
\begin{align}
p_{\D}(\btheta) = P_{\D}^{-1} \, q(\btheta) \, \L(\btheta). \label{eq:rej_posterior}
\end{align}

\subsection{Rejection Principle}

The BUS formulation is based on the conventional rejection principle. Let $c$,
called the {\it likelihood multiplier} in this work, or simply {\it multiplier},
be a scalar constant such that for all $\btheta$ the following inequality holds:
\begin{align}
c \, \L(\btheta) \leq 1. \label{eq:reject_c}
\end{align}

\noindent Also, assume that i.i.d. samples can be efficiently generated from the
prior PDF $q(\btheta)$. This is a reasonable assumption because the prior PDF is
often chosen from a standard class of distributions (\eg Gaussian,
exponential). In the above context, a sample $\btheta$ distributed as the
posterior PDF $p_{\D}(\btheta) \propto q(\btheta) \L(\btheta)$ in
\eqref{eq:rej_posterior} can be generated from the following straightforward
application of the rejection principle:

\medskip
Step 1. Generate $U$ uniformly distributed on $[0,1]$ and $\btheta$
distributed with the prior PDF $q(\btheta)$.

Step 2. {\color{black}If $U < c \L(\btheta)$}, return $\btheta$ as the sample. Otherwise
go back to Step 1.

\noindent It can be shown \citep{Straub2014} that the sample $\btheta$ returned
from the above algorithm is distributed as $p_{\D}(\btheta)$, that is by
marginalising the auxiliary component $u$ as
\begin{align}
p_{\btheta'}(\btheta) & = \int_0^1 p_{\btheta', u} (\btheta,u) \, du \propto p_{\D}(\btheta). 
\label{eq:marginal_failure}
\end{align}

Although the above rejection algorithm is theoretically viable, the acceptance
probability and hence efficiency is often very low in typical updating problems
with a reasonable amount of data. This is because a sample drawn from the prior
PDF $q(\btheta)$ often has a low likelihood value $\L(\btheta)$ when the data is
informative about the uncertain parameters, leading to significant change from
the prior to the posterior PDF.

\subsection{Equivalent reliability problem}

Recognizing the high rejection rate when the rejection principle is directly
applied, BUS transforms the problem into a reliability problem. The premise is
that this will allow the existing algorithms developed in the reliability method
literature to be applied to Bayesian updating problems, especially those are
that capable of generating samples from the frequent (safe) region to the rare
(failure) region, such as SuS. The reliability problem analogy of the Bayesian
updating problem is constructed as follows. Consider a reliability problem with
uncertain parameters $(\btheta, u)$ having the joint PDF $q(\btheta) \, I(0\leq
u\leq1)$, where the {\it failure event} is defined as
\begin{align}
F = \{U < c \, \L(\btheta)\}. \label{eq:equiv_rel}
\end{align}

\noindent Suppose that by some means (\eg SuS) we can obtain a {\it failure
sample} distributed as $q(\btheta) \, I(0\leq
u\leq1)$ and conditional on the failure event $F$. The PDF of the
failure sample, denoted by $(\btheta', U')$, is given by
\begin{align}
p_{\btheta', U'}(\btheta, u) = P_F^{-1} \, q(\btheta) \, I(0\leq u\leq1) \, I(u < c \L(\btheta)) , 
\label{eq:joint_failure}
\end{align}

\noindent where 
\begin{align}
P_F = \int \int q(\btheta) \, I(0\leq u\leq1) \, I(u < c \L(\btheta)) \, du\,d\btheta , 
\end{align}

\noindent is the {\it failure probability} of the reliability problem. 

In the above formulation, the driving response variable can be defined as
\begin{align}
Y = c \, \L(\btheta) - U, \label{eq:driving_var}
\end{align}

\noindent so that the failure event corresponds to 
\begin{align}
F = \{Y > 0\}. 
\end{align}

\noindent Populations of failure samples conditional on the intermediate failure
events $F_i = \{Y > b_i\}$ for adaptively increasing $b_i$ $(i = 1, 2, \ldots)$
are then generated until they pass the target failure event $F = \{Y > 0\}$,
from which the samples conditional on $F$ are collected as the posterior
samples.

Note that in the original formulation the driving response variable was in fact
defined $Y = U - c \, \L(\btheta)$. The presentation in \eqref{eq:driving_var}
is adopted so that it is consistent with the conventional SuS literature, where
the intermediate threshold levels increase rather than decrease as the
simulation level ascends.

\section{Likelihood multiplier}

One issue of concern in the BUS formulation is the choice of the multiplier $c$
satisfying the inequality in \eqref{eq:reject_c}, which is not always trivial.
Some suggestions were given, by inspecting the mathematical structure of the
likelihood function \citep{Straub2014}; or by adaptively using empirical the
information from the generated samples \citep{Betz2014}. The latter is more
robust as it does not require preliminary analysis, but, as stated by the
authors, in order to guarantee that it satisfies the inequality, more
theoretical analysis is needed. In this section we rigorously investigate more
fundamentally the role of the multiplier and its effect on the results if it is
not properly chosen. The investigation leads to a reformulation of BUS, to be
proposed in the next section.

In the context of BUS, the multiplier needs to be chosen before starting a SuS
run as it affects the definition of the driving variable $Y$ in
\eqref{eq:driving_var}. Clearly, the multiplier affects the distribution of the
driving variable as well as the generated samples. Recall that only those
samples conditional on $Y = c \, \L(\btheta) - U > 0$ are collected as the
posterior samples. The larger the value of $c$ the more efficient the SuS run,
because this will increase $Y$ and the failure probability $P(Y > 0)$, thereby
reducing the number of simulation levels required to reach the target failure
event.

From the inequality in \eqref{eq:reject_c}, the choice of the multiplier is
governed by the region in the parameter space of $\btheta$ where the value of
$\L(\btheta)$ is large. The largest admissible value of $c$ is given by
\begin{align}
c_{\max} = \left[ \text{max}_{\btheta} \, \L(\btheta) \right]^{-1}. \label{eq:c_max}
\end{align}

\noindent This result is well-known in the rejection sampling literature.
Clearly, this value is not known before computation. While using a value smaller
than $c_{\max}$ will be less efficient but still give the correct distribution
in the samples, using a value larger than $c_{\max}$ will lead to bias in the
distribution of the samples. In some problems it is possible to investigate the
mathematical structure of $\L(\btheta)$ and derive inequalities to propose a
choice of $c$ that guarantees $c \L(\btheta) \leq 1$. In such cases, it is
computationally beneficial to use that value. However, in general it is
difficult by numerical means to have a choice of $c$ that guarantees the
inequality.

When an inadmissible (too large) value of the multiplier is used, the resulting
distribution of the failure samples will be truncated, leading to bias in the
posterior statistical estimates based on them. To see this, note that the
inequality \eqref{eq:reject_c} was used in establishing the third equality in
\eqref{eq:marginal_failure}. Suppose this inequality is violated, say, within
some region $B$:

\begin{align}
B = \{ \btheta \in \R^n : c\L(\btheta) > 1\}. 
\end{align}

\noindent Then for any $\btheta \in B$, $I(u < c \, \L(\btheta) ) = 1$ for $ u
\in (0,1)$ and so \eqref{eq:marginal_failure} implies
\begin{align}
p_{\bs \Theta}(\btheta) = P_F^{-1} \, q(\btheta) \, \int_0^1 \, I(u < c \L(\btheta)) du = P_F^{-1} 
q(\btheta).
\end{align}

\noindent For those $\btheta$ not in $B$, the inequality is satisfied and the
PDF value $p_{\bs \Theta}(\btheta)$ remains to be the correct posterior PDF
$p_{\D}(\btheta)$ as in \eqref{eq:marginal_failure}:
\begin{align}
p_{\bs \Theta}(\btheta) =  P_F^{-1} \, q(\btheta) \, c \L(\btheta) \propto p_{\D}(\btheta).
\end{align}

Thus, an inadmissible (too large) value of $c$ introduces bias in the problem by
truncating the posterior PDF to be the prior PDF in the region of $\btheta$
where the inequality is violated. Intuitively, in the context of rejection
principle, if the multiplier is not small enough, the samples drawn from the
prior PDF are accepted (incorrectly) {\it too often}, rendering their
distribution closer to the prior PDF than they should be.

The truncation effect is illustrated in Figure \ref{fig:truncation}, where the
shaded interval denotes the region $B$. The prior PDF $q(\btheta)$ is taken to
be constant and so $p_{\D}(\btheta) \propto c\L(\btheta)$. Instead of the target
posterior PDF, the resulting distribution of the sample takes the shape of the
center line. Within the region $B$ it is truncated to the shape of $q(\btheta)$.

\begin{figure}[!htp]
\centering
\includegraphics[draft=false,width =.6\linewidth]{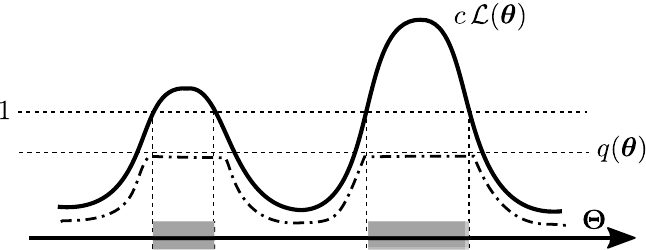}

\caption{Truncation of distribution in rejection algorithm. Center line - resulting
distribution (short of the constant $P_F^{-1}$); shaded interval - truncation
region $B$ where $c\L(\btheta) > 1$. }

\label{fig:truncation}

\end{figure}

As long as the multiplier satisfies the inequality in \eqref{eq:reject_c}, it is
completely arbitrary and it does not affect the distribution of the resulting
samples, which is equal to the correct posterior PDF. This observation is
trivial but has important implications. In the original BUS context, for
example, it implies that the samples generated in different simulation runs with
different admissible values of the multiplier can be simply averaged for
estimating posterior statistics, because they all have the same correct
posterior distribution. This fact shall also be used later when developing the
proposed algorithm in this work.

\section{Alternative BUS formulation}

Having clarified the role of the multiplier, we now present a modification of
the original BUS formulation that isolates the effect of the multiplier in a
fundamental manner. This leads to a formulation where SuS can be performed
without having to choose the multiplier before the simulation run; and where the
effect of the multiplier appears clearly in the accuracy of the posterior
distribution. The modification is based on the simple observation that the
failure event in \eqref{eq:equiv_rel} can be rewritten as
\begin{align}
F = \left\lbrace \ln \left[ \frac{\L(\bs \Theta)}{U} \right] > - \ln c \right\rbrace. 
\end{align}

\noindent This means that the driving variable in SuS can be defined as
\begin{align}
Y = \ln \left[ \frac{\L(\bs \Theta)}{U} \right], \label{eq:driving_variable}
\end{align}

\noindent and the target failure event can now be written as
\begin{align}
F = \{Y > b\}, 
\end{align}

\noindent where 
\begin{align}
b = - \ln c. \label{eq:driving_var_exp}
\end{align}

\noindent The base of the logarithm is arbitrary but we choose to use natural
logarithm here to facilitate the analysis.

Despite the apparently slight change in definition of the driving variable, the
setup above changes the philosophy behind the multiplier and the way SuS is
implemented to produce the posterior samples. The driving variable no longer
depends on the multiplier and so the choice of the latter is no longer needed to
start the SuS run. The multiplier only affects the target threshold level $b$
beyond which the samples can be collected as posterior samples. As remarked at
the end of the last section, as long as the multiplier is sufficiently small to
satisfy the inequality in \eqref{eq:reject_c}, the distribution of the samples
conditional on the failure event $F =\{U < c\L(\btheta)\}$ is invariably equal
to the posterior distribution. This implies that in the proposed formulation the
distribution of the samples conditional on $\{Y > b\}$ will settle (remain
unchanged) for sufficiently large $b$. In the original BUS formulation where the
driving variable is defined as $Y = c\L(\btheta) - U$ in
\eqref{eq:driving_var_exp} for a particular value of $c$ (assumed to be
admissible), only the samples conditional on the failure event $F = \{Y > 0\}$,
\ie for a threshold value of exactly zero, have the posterior distribution.

Substituting $b = -\ln c$ from \eqref{eq:driving_var_exp} into
\eqref{eq:reject_c} and rearranging, the inequality constraint in terms of $b$
is given by, for all $\btheta$,
\begin{align}
b > \ln \L(\btheta). 
\end{align}

\noindent From \eqref{eq:c_max}, the maximum admissible value of $c$ is
$c_{\max} = \left[ \text{max}_{\btheta} \, \L(\btheta) \right]^{-1}$.
Correspondingly the minimum value of $b$ beyond which the distribution of
samples will settle at the posterior PDF is
\begin{align}
b_{\min} = - \ln c_{\max} = \ln \left[
\text{max}_{\btheta} \, \L(\btheta) \right]. 
\end{align}

\noindent Similar to $c_{\max}$, the value of $b_{\min}$ is generally unknown
but this does not affect the SuS run. Under the proposed formulation, one can
simply perform SuS with increasing levels until one determines that the
threshold level of the highest level has passed $b_{\min}$. Despite not knowing
$b_{\min}$, this turns out to be a more well-defined task as it is shown later
that the CCDF of $Y$, \ie $P(Y > b)$ versus $b$, has characteristic behaviour
for $b > b_{\min}$ .

The logarithm in the above formulation is introduced for analytical and
computational reasons, so that the driving variable is a well-defined random
variable. In particular
\begin{align}
Y = \ln \left[ \frac{\L(\btheta)}{U} \right] = \ln \L(\btheta) + \ln(U^{-1}). 
\end{align}

\noindent For $U$ uniformly distributed on $[0,1]$, $\ln(U^{-1})$ is
exponentially distributed with mean 1. For a well-posed likelihood function
$\L(\btheta)$ one can expect that $\ln \L(\btheta)$ is a well-defined random
variable when $\btheta$ is distributed as $q(\cdot)$ , and so is the driving
variable $Y$. In particular, if the first two moments of $\ln \L(\btheta)$ are
bounded, then the same is also true for the first two moments of $Y$ because

\begin{align}
E[Y] &= E[\ln \L(\btheta) + \ln U^{-1}] \nonumber \\
&= E[\ln \L(\btheta) ] + 1, \\
E[Y^2] &=  E\{[\ln \L(\btheta) + \ln U^{-1}]^2\} \nonumber \\
&= E\{[\ln \L(\btheta)]^2\} + 2 E[\ln \L(\btheta)]E[\ln U^{-1}] + E\{[\ln U^{-1}]^2\} \nonumber \\
&= E\{[\ln \L(\btheta)]^2\} + 2 E[\ln \L(\btheta)] + 2, 
\end{align}

\noindent since $E[\ln U^{-1}] = 1$ and $E\{[\ln U^{-1}]^2\} = 2$ (properties of the
exponential variable $\ln U^{-1}$). 

The authors believe that, while respecting the originality of BUS, the proposed
formulation resolves the issue with the multiplier, as the requirement of
choosing it a priori in the original formulation has been eliminated. The
theoretical foundation of the proposed formulation is encapsulated in the
following theorem.

\begin{theorem} \label{theo:admissible}
Let $\btheta \in \R^n$ be a random vector distributed as $q(\btheta)$ and $U$ be a random variable
uniformly
distributed on $[0,1]$; with $\btheta$ and $U$ independent. Let $\L(\btheta)$ be a
non-negative scalar function of $\btheta$. Define $Y =
\ln[\L(\btheta) / U]$ and $b = -\ln c$, for $c \in \R$. Then, for any $b >
\ln[\text{max}_{\btheta} \L(\btheta)]$:
\begin{enumerate}    

\item The distribution of $\btheta$ conditional on $\{Y >
b\}$ is $p_{\D}(\btheta) = P_{\D}^{-1} \, q(\btheta) \, \L(\btheta)$ where
$P_{\D} = \int q(z)\, \L(z) \, dz $ is a normalizing constant; 

\item $P_{\D} = e^b \, P(Y > b)$.

\end{enumerate}
\end{theorem}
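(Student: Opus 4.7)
The plan is to translate the event $\{Y>b\}$ back into the rejection-sampling form $\{U<c\L(\btheta)\}$ and then compute both the marginal probability $P(Y>b)$ and the conditional density of $\btheta$ by direct integration over the auxiliary variable $U$. The two statements will fall out simultaneously.

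First I would rewrite the event. Since $U>0$ almost surely, $Y=\ln[\L(\btheta)/U]>b$ is equivalent to $\L(\btheta)/U>e^b$, i.e.\ $U<e^{-b}\L(\btheta)=c\L(\btheta)$. The hypothesis $b>\ln[\max_{\btheta}\L(\btheta)]$ is exactly the admissibility condition $c\L(\btheta)\le 1$ for every $\btheta$, so for each fixed $\btheta$ the set $\{u\in[0,1]:u<c\L(\btheta)\}$ is the full interval $[0,c\L(\btheta))$, of length $c\L(\btheta)$.

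Next I would exploit that $(\btheta,U)$ has joint density $q(\btheta)\,I(0\le u\le 1)$ by independence. Then
\begin{align*}
P(Y>b) &= \int q(\btheta)\int_0^1 I\!\bigl(u<c\L(\btheta)\bigr)\,du\,d\btheta
= c\int q(\btheta)\,\L(\btheta)\,d\btheta = c\,P_{\D} = e^{-b}P_{\D},
\end{align*}
which rearranges to $P_{\D}=e^{b}P(Y>b)$, establishing part 2. For part 1, the conditional joint density of $(\btheta,U)$ given $\{Y>b\}$ is $P(Y>b)^{-1}q(\btheta)\,I(0\le u\le 1)\,I(u<c\L(\btheta))$, and marginalising over $u$ (again using admissibility to evaluate the inner integral exactly as $c\L(\btheta)$) gives
\begin{align*}
p(\btheta\mid Y>b) &= \frac{c\,q(\btheta)\,\L(\btheta)}{P(Y>b)} = \frac{q(\btheta)\,\L(\btheta)}{P_{\D}} = p_{\D}(\btheta),
\end{align*}
where the second equality uses the relation just derived in part 2.

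There is no real obstacle here beyond being careful with the admissibility hypothesis: the only place where the bound $b>\ln[\max_{\btheta}\L(\btheta)]$ is used is to guarantee that $c\L(\btheta)\le 1$, so that the indicator $I(u<c\L(\btheta))$ on $u\in[0,1]$ integrates to exactly $c\L(\btheta)$ rather than to $\min(1,c\L(\btheta))$. Without this, the marginalisation would truncate as in the bias analysis of the previous section, and both conclusions would fail. I would flag this explicitly so that the role of the hypothesis is transparent, and note that the two statements are tightly coupled: part 2 is really just the normalisation constant computed inside the proof of part 1.
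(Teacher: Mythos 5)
Your proposal is correct and follows essentially the same route as the paper's proof: translate $\{Y>b\}$ into the rejection-sampling event $\{U<c\,\L(\btheta)\}$, integrate out the auxiliary uniform variable using the admissibility condition $c\,\L(\btheta)\le 1$, and read off both the conditional marginal of $\btheta$ and the identity $P_{\D}=e^{b}P(Y>b)$. Your version is if anything slightly tidier, since you make the normalising constant explicit rather than leaving the conditional density as a proportionality.
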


\begin{proof} 

In order to prove the first part of the above theorem, first note that events $\{Y >
b\}$ and $\{c \L(\btheta)> U\}$ are equivalent. Integrating out the uniform random variable from
the PDF of the failure sample given by equation \eqref{eq:joint_failure} gives:
\begin{align}
p_{\btheta'}(\btheta) & = \int_0^1 p_{\btheta', U'} (\btheta,u) \, du \nonumber \\
&= p_{F}^{-1} \, q(\btheta) \, \int_0^1 I(0\leq u \leq1) \, I(u < c \, \L(\btheta)) \, du 
\label{eq:marginal_posterior} \\
&= p_{F}^{-1}\, q(\btheta) \, c \L(\btheta) \nonumber \\
&\propto p_{\D}(\btheta).  \nonumber 
\end{align}


\noindent The result will be valid for any $c < [\text{max}_{\btheta} \L(\btheta)]^{-1}$, or
equivalently for any $b > \ln[\text{max}_{\btheta} \L(\btheta)]$.

\noindent For the second part of the theorem, since $Y = \ln[\L(\btheta)/U]$ and
$(\btheta,U)$ has a joint PDF $q(\btheta) I(0 < u < 1)$, $P(Y > b)$
is given by
\begin{align}
P(Y > b) &= \int \int q(\btheta) \, I(0<u<1) \, I \left( \ln\left[ \frac{\L(\btheta)}{u}\right]> b
\right) 
\, du \, d\btheta \nonumber \\
&= \int q(\btheta) \int_0^1 I(u < e^{-b} \L(\btheta)) \, du\,d\btheta \\
&= e^{-b} \int q(\btheta) \L(\btheta) \, d\btheta \nonumber, 
\end{align}

\noindent since $\int_0^1 I(u < e^{-b} \L(\btheta)) \, du = e^{-b}\L(\btheta)$
when $e^{-b}\L(\btheta) < 1$ for all $\btheta$ ($b$ is admissible). Observe,
from the definition of the posterior \eqref{eq:posterior}, that $P_{\D}$ is
simply the last integral in \eqref{eq:marginal_posterior}. Thus,

\begin{align}
P_{\D} = e^b P(Y > b) & & b > b_{\min}. \label{eq:evidence_prob}
\end{align}

\noindent That is, when $b > b_{\min}$, $P_{\D}$ can be obtained as a product of
$e^b$ and the failure probability $P(Y>b)$ it corresponds to.

\end{proof}

\section{Bayesian model class selection}

In addition to providing the posterior distribution and estimating the updated
expectation in \eqref{eq:expectation_model_data}, the posterior samples can be
used for estimating the normalizing constant $P_{\D}$ in \eqref{eq:normalising}.
This is the primary target of computation in Bayesian model class selection
problems, where competing models are rated. In this section we show how this can
be done using the conditional samples generated by SuS in the context of the
proposed formulation.

Let $b$ be an admissible threshold level, \ie $b > b_{\min}$ , so that the
samples conditional on $\{Y > b\}$ have the correct posterior distribution
$p_{\D}(\btheta)$. Consider the failure probability $P(Y > b)$, which can be
estimated using the samples in SuS. 

Note that equation \eqref{eq:evidence_prob} can be rewritten as
\begin{align}
 P(Y > b) = e^{-b} P_{\D}  & & b > b_{\min}. \label{eq:evidence_prob_reag}
\end{align}

\noindent Since $P_{\D}$ is constant for a given problem, this suggests that for
sufficiently large $b$, $P(Y > b)$ will decay exponentially with $b$.
Interpreting $P(Y > b)$ as the CCDF of $Y$, this exponential decay gives a
picture similar to a typical CCDF encountered in reliability analysis. This is
another (though secondary) merit of introducing the logarithm in the definition
of the driving variable $Y$ in \eqref{eq:driving_variable}.

\section{Characteristic trends} \label{sec:charact_trends}

As shown in the last section, when $b > b_{\min}$ the failure probability $P(Y >
b) $ is theoretically related to the evidence $P_{\D}$
through \eqref{eq:evidence_prob}. In the actual implementation, $b_{\min}$ is not
known and so it is necessary to determine whether $b>b_{\min}$ so that the
samples conditional on $\{Y > b\}$ can be confidently collected as posterior samples. We argue that
the 
variation of $P(Y>b)$ with $b$ takes on
different characteristics on two different regimes of $b$. This can be used
to tell whether the threshold value of a particular simulation
level has already passed $b_{\min}$ in a SuS run, thereby suggesting a stopping criterion.

First, note that $P(Y > b)$ is a non-increasing function of $b$. When $b$ is at
the left tail of the CCDF, $P(Y>b) \approx 1$ and it typically decreases with
$b$, equal to $P_{\D}$ at $b > b_{\min}$. When $b >b_{\min}$, we know from
\eqref{eq:evidence_prob_reag} that $P(Y > b) = P_{\D} e^{-b}$ and so it decays
exponentially with $b$. We can thus expect that, as $b$ increases from the left
tail and passes $b_{\min}$, the CCDF of $Y$ typically changes from a decreasing
function to a fast (exponentially) decaying function. Correspondingly, the
function $\ln P(Y > b)$ changes from a slowly decreasing function to a straight
line with a slope of -1.

On the other hand, consider the following function:

\begin{align}
V(b) = b+\ln P(Y > b). \label{eq:evidence_function}
\end{align}

\noindent This function can be used for computing the log-evidence $\ln P_{\D}$ as it can be readily
seen 
that

\begin{align}
V (b) =  \ln P_{\D} & & b > b_{\min}. \label{eq:evidence_feasible}
\end{align}

\noindent When $b$ is at the left tail of the CCDF, $\ln P(Y > b) \approx 0$ and
so $V(b) \approx b$ increases linearly with $b$. The above means that as $b$
increases from the left tail of the CCDF of $Y$ the function $V (b)$ increases
linearly, going through a transition until it settles (remains unchanged) at
$\ln P_{\D}$ after $b > b_{\min} $. The characteristic behavior of $\ln P(Y > b)$
and $V (b)$ are depicted in Figure \ref{fig:trends}.

\begin{figure}[!htp]
\centering
\includegraphics[draft=false, width=.6\linewidth]{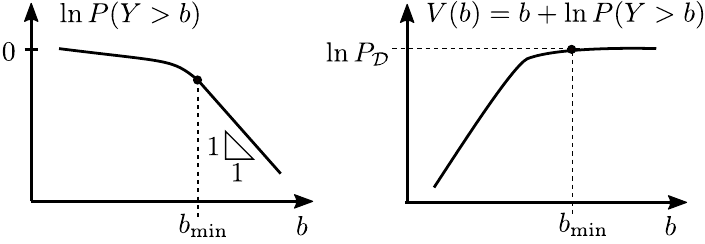}
\caption{Characteristic trends of $\ln P(Y > b)$ and $V(b)$.}
\label{fig:trends}
\end{figure}

Strictly speaking, the above arguments only apply to the theoretical quantities.
In a SuS run the quantities $\ln P(Y > b)$ and $V (b)$ as a function of $b$ can
only be estimated on a sample basis. The resulting estimated counterparts will
exhibit random deviation from the theoretical trends due to statistical
estimation error, whose extent depends on the number of samples used in the
simulation run (the larger the number of samples, the smaller the error).
Nevertheless, the above arguments and Figure \ref{fig:trends} provide the basis
for determining the simulation level to stop and to collect the posterior
samples, that is, once the trainsition in the slope of $\ln P(Y > b)$ and $V (b)$ is complete. On
this 
basis, we present an automatic stopping condition that is enforced 
once the algorithm detects that the transition has occurred.

\section{Automatic Stopping Strategy}
\label{s:stopping}

In the proposed context, the posterior samples can be obtained from the
conditional samples in a straightforward manner from a SuS run. No modification
of SuS is necessary. Below we outline how this can be done, focusing only on
issues directly related to the Bayesian updating problem.

The primary target of the Bayesian updating problem is to generate
posterior samples of $\bs \Theta$ distributed as the posterior PDF
$p_{\D}(\btheta) \propto q(\btheta)\L(\btheta)$, where $q(\btheta)$ is the prior
distribution assumed to be chosen from a standard class of distributions (e.g.,
Gaussian, exponential); and $\L(\btheta)$ is the likelihood function for a given
set of data. As reviewed in Section \ref{sec:sus}, a SuS run produces the
estimate of the CCDF of the driving variable $Y $, \ie $P(Y > b)$ versus $b$.
The posterior samples for Bayesian model updating can be obtained as the
conditional samples in a SuS run for the reliability problem with driving
variable $Y = \ln[\L(\btheta)/U]$, where $\btheta$ is distributed as
$q(\btheta)$ and $U$ is uniformly distributed on [0,1]; with $ \btheta$ and $U$ 
independent. The conditional samples are collected from the level whose
threshold level is determined to be greater than $b_{\min}$.

\subsection{Stopping criterion}

From the discussion in Section \ref{sec:charact_trends} and the definition of
SuS, it is clear that the intermediate  failure levels will continue to increase as the
algorithm progresses. For a given level $k$ where $b_k$ is an admissible value
for the failure event, the samples generated will eventually be distributed as desired.
The following theorem establishes theoretical guarantees that such failure level
can be achieved in a finite number of iterations, given some
regularity assumptions. Moreover, it provides a stopping criterion to terminate
the algorithm and prevent the generation of unnecessary SuS levels.

\begin{theorem}

Let the Bayesian inference problem be defined by an upper-bounded likelihood
function $\L(\bs\theta)$, a prior density
$q(\bs\theta)$ and associated posterior $p(\bs\theta | \D)$. The marginal
distribution of $\bs\theta$ conditional on the intermediate failure levels, denoted by $p(\bs\theta
| F_k)$,
converges to the posterior. Moreover, there exist constants $e^{-b_k}$ and a monotone decreasing
sequence $a_k$, such that

\begin{align}
\lim_{k \rightarrow \infty } a_k = 0.
\end{align}
where $a_k$ is the prior probability of the set  $B_k = \{ \bs\theta : e^{-b_k} \L(\bs\theta) > 1
\}$.
\end{theorem}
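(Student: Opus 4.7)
The plan is to exploit Theorem \ref{theo:admissible}, which already furnishes the exact conditional distribution of $\btheta$ given $\{Y > b\}$ whenever $b$ is admissible, i.e.\ $b > b_{\min} := \ln[\max_{\btheta}\L(\btheta)]$. Under the upper-boundedness hypothesis on $\L$, the quantity $b_{\min}$ is finite, so the whole task reduces to showing that the SuS thresholds eventually pass $b_{\min}$ and that the associated truncation sets are nested and shrink to the empty set.

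First I would take $b_k$ to be the intermediate threshold generated by SuS at stage $k$ and show that $\{b_k\}$ is non-decreasing with $b_k \to \infty$. Monotonicity is immediate from the SuS construction, since $b_{k+1}$ is the upper $p_0$-quantile of the samples conditional on $\{Y > b_k\}$. Divergence would follow from the fact, verified just before Theorem \ref{theo:admissible}, that $Y = \ln \L(\btheta) + \ln U^{-1}$ has bounded first two moments whenever $\ln \L(\btheta)$ does; consequently $P(Y > b) \to 0$ as $b \to \infty$, and since the theoretical failure probability at level $k$ equals $p_0^k \to 0$, the corresponding $b_k$ must diverge.

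Granting $b_k \to \infty$, the first claim of the theorem is almost immediate: because $b_{\min}$ is finite, there exists a finite $k^\star$ with $b_k > b_{\min}$ for all $k \geq k^\star$, and Theorem \ref{theo:admissible}(1) gives the exact equality $p(\btheta | F_k) = p_{\D}(\btheta)$ from that level onward, which is far stronger than mere convergence and in particular implies convergence in total variation. For the second part, set $c_k := e^{-b_k}$, $B_k = \{\btheta : c_k \L(\btheta) > 1\} = \{\btheta : \L(\btheta) > e^{b_k}\}$, and $a_k = \int_{B_k} q(\btheta) \, d\btheta$. Monotonicity of $b_k$ yields the nesting $B_{k+1} \subseteq B_k$, hence $a_k$ is monotone decreasing; the upper bound on $\L$ then forces $B_k = \emptyset$ and $a_k = 0$ as soon as $b_k > b_{\min}$, so in particular $a_k \to 0$ (and in fact vanishes identically for $k$ large enough).

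The main obstacle I anticipate is the rigorous justification of $b_k \to \infty$. At the population level this is just the tail decay of $Y$ combined with the geometric shrinkage of the level probabilities $p_0^k$, but in an actual SuS run $b_k$ is a sample-based quantile, and one would need ergodicity of the independent-component MCMC steps to rule out pathological stagnation of the chains around a sub-optimal threshold. For the theorem as stated it suffices to work with theoretical quantiles, which establishes existence of the sequence $(b_k, a_k)$ with the required properties; the sample-level refinement can then be appended by invoking the standard SuS convergence machinery already cited in Section \ref{sec:sus}.
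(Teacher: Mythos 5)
Your proposal is correct and follows essentially the same route as the paper's proof: both rest on Theorem \ref{theo:admissible} to conclude that the conditional distribution equals the posterior exactly once a level is terminal, and both obtain $a_k \searrow 0$ from the nesting $B_{k+1} \subseteq B_k$ of the inadmissible sets together with monotonicity of the prior measure. The one substantive difference is that you explicitly justify that the thresholds $b_k$ eventually exceed $b_{\min}$ (via the unboundedness of $\ln U^{-1}$, the finiteness of $b_{\min}$ under the bounded-likelihood hypothesis, and the geometric decay $p_0^k$ of the level probabilities) --- a step the paper leaves implicit with ``for a sufficiently large value of $k$'' --- whereas the paper additionally writes out the truncated-mixture form of $p(\bs\theta \mid F_k)$ on $B_k$ versus $B_k^{\mathsf{c}}$ and the corresponding decomposition of $P_{F_k}$, which you bypass by going directly to the exact equality beyond $k^{\star}$.
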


\begin{proof}

In Theorem \ref{theo:admissible}, it was proved that as long as the $j$-th failure level
satisfies $b_j > b_{\min}$, any sample generated will be distributed according
to the target posterior distribution. The level $b_j$ is said to be a terminal
level since any value of $b_{j+1}$ is, by definition, $b_{j+1} > b_{j}$. Hence,
the samples will be distributed as desired for any terminal level.

To prove the theorem, let us characterise a non-terminal level $k$ such that
$b_k < b_{\min}$. For the optimal threshold level $b_{\min}$, the inequality

\begin{align}
u < e^{-b_{\min}}  \L(\bs\theta) < 1, \label{eq:ineq_thresh}
\end{align}

\noindent is guaranteed for any value of a failure sample $(\bs\theta,u)$ being
distributed jointly as equation \eqref{eq:joint_failure}. In contrast, a 
non-terminal level satisfies $ e^{-b_{\min}} \L(\bs\theta) < e^{-b_k}
\L(\bs\theta)$ and it is not possible to determine an analogous right-hand side
of inequality \eqref{eq:ineq_thresh}. Let the inadmissible set be defined
as $B_k = \{ \bs\theta : e^{-b_k} \L(\bs\theta) > 1 \}$. It follows that the
marginal distribution of the target variable is given by

\begin{eqnarray}
p(\bs\theta | F_k) \propto  \left\lbrace\begin{array}{ll} 
q(\bs\theta) & \text{if} \, \bs\theta \, \in \, B_k \\ \\
e^{-b_k} \, q(\bs\theta) \, \L(\bs\theta) & \text{if} \, \bs\theta \, \in \, B_k^{\mathsf{c}} .
\end{array} \right. 
\end{eqnarray}

\noindent Note that for all samples in the inadmissible set $B_k$, the marginal
is proportional to the prior distribution, whilst for the samples in the
admissible set $B_k^{\mathsf{c}}$ the target density is proportional to the
posterior distribution. Marginalising in order to compute the normalising
constant results in

\begin{align}
P_{F_k} &= \int_{\bs\Theta} \, \left[ \, q(\bs\theta) \, I (\bs\theta \in {B_k}) + e^{-b_k} \,
q(\bs\theta) 
\, \L(\bs\theta) \, I ( \bs\theta \in B_k^{\mathsf{c}} )\right] \, d\bs\theta \nonumber \\
 &= \int_{B_k} q(\bs\theta) \, d\bs\theta + e^{-b_k} \,  \int_{B_k^{\mathsf{c}}} q(\bs\theta) \, 
\L(\bs\theta) \, d\bs\theta \nonumber \\
 &= P_{\bs\theta} ( B_k )  + e^{-b_k} \, P_{\D} \, P_{\bs\theta | \D}  (B_k^{\mathsf{c}}) , 
\label{eq:constants}
\end{align}

\noindent where $P_{\bs\theta} ( B_k ) $ denotes the probability of event $B_k$
under the prior distribution and $P_{\bs\theta | \D} (B_k^{\mathsf{c}})$ denotes
the probability of event $B_k^{\mathsf{c}}$ under the posterior distribution. Note that
equation \eqref{eq:constants} is consistent with the case where $b_k$ is a
terminal level. If that is the case, the pair $(\bs\theta, u)$ satisfies $u <
e^{-b_k} \L(\bs\theta)$ by the definition of the driving variable $Y$ and thus
$B_k = \varnothing$. Let us rewrite the inadmissible set as 
\begin{align}
B_k = \{ \bs\theta : \L(\bs\theta) > e^{b_k} \}.
\end{align}

\noindent Given an increasing sequence of failure levels, it can be seen that the
sequence of inadmissible sets is monotone decreasing, namely
\begin{align}
    B_{k} \supset B_{k+1} \supset \ldots \supset \varnothing.
\end{align}

\noindent This fact is depicted in Figure \ref{fig:admissible_set}. 

\begin{figure}[H]
\centering
\includegraphics[draft=false,width=.5\textwidth]{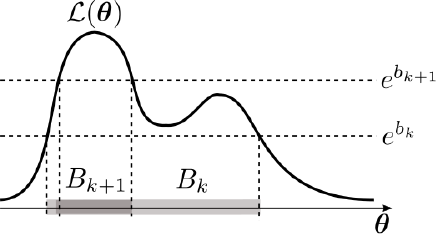}
\caption{Increasing failure levels and likelihood.}
\label{fig:admissible_set}
\end{figure}

\noindent Additionally, since the prior distribution is a probability measure,
it satisfies the monotonicity property, namely $P(B_{k+1}) \leq P(B_{k})$ for
all $k$. Let us define the sequence $a_k$ as the prior probability of the
inadmissible sets, \ie $a_k = P_{\bs\theta} ( B_k )$. As a consequence of the
monotonicity property, it follows that $a_k$ is a monotone decreasing
sequence of values converging to zero from above, denoted by

\begin{align}
a_k \searrow 0. \label{eq:stop01}
\end{align}

Moreover, since the sets $B_k$ are monotone decreasing, then the
sequence of complements is increasing, that is
\begin{align}
B_{k}^{\mathsf{c}} \subset B_{k+1}^{\mathsf{c}} \subset \ldots \subset \Theta.
\end{align}

\noindent Let $m_k$ denote the posterior probability of the set
$B_{k}^{\mathsf{c}}$. Analogous to $a_k$, the sequence $m_k$ is monotone
increasing converging to 1 from below. This is denoted by
\begin{align}
m_k \nearrow 1. \label{eq:stop02}
\end{align}

Expressions \eqref{eq:stop01} and \eqref{eq:stop02} allow to establish that
for a sufficiently large value of $k$
\begin{align}
p_{F_k} = e^{-b_k} \, P_{\D}, 
\end{align}

\noindent is satisfied and the result is established.

\end{proof}

The preceding theorem allows us to propose a stopping criterion for the BUS
algorithm with driving variable $ Y = \log[ \L (\bs\theta ) /u ] $ using SuS . The value of $a_k$
can be 
made 
arbitrarily small by means of the failure level $b_k$, which is learnt automatically during the
algorithm. 
The computation of $a_k$ is challenging, since it involves a multiple integral. Note that the prior 
probability can be written as

\begin{align}
a_k = P_{\bs\theta} ( B_k ) = P_{\bs\theta} (\L(\bs\theta) > e^{b_k}) \label{eq:inner}
\end{align}

\noindent which is in itself a reliability problem, where the likelihood $\L(\bs\theta)$
takes the role of a performance function and $e^{b_k}$ is a reliability
threshold. Since the prior distributions are chosen from a standard catalogue of
density functions and the probability is assumed to be small, such integral can also
be computed by means of SuS. In this setting, computing equation
\eqref{eq:inner} can be regarded as performing an {\it inner level} SuS. The sampling of the 
expanded variables $(\bs\theta, u)$ from the
failure levels in equation \eqref{eq:joint_failure}, is regarded as {\it outer level} SuS.

\subsection{Posterior statistical estimation}

The posterior samples $\{\btheta^{(m)}_k : k = 1,\ldots, N\}$ obtained from
simulation level $m$ for which $b_m > b_{\min}$ can be used for estimating
posterior statistics in Bayesian updating problem and the evidence for Bayesian
model class section. For the former, the posterior expectation in \eqref{eq:expectation_model_data}
is
estimated by simple averaging:

\begin{align}
E[r(\btheta) | \D, \M)] \approx \frac{1}{N} \sum_{k = 1}^N r(\btheta_k^{(m)}). 
\end{align}

\noindent On the other hand, based on \eqref{eq:evidence_prob}, the evidence can be estimated by

\begin{align}
P(\D | \M) = P_{\D} \approx \hat{P}_{\D} = e^{b_m} p_0^m. \label{eq:evidence_subset}
\end{align}

\noindent Taking logarithm, the log-evidence is estimated by

\begin{align}
\ln P(\D | \M ) = \ln P_{\D} \approx \ln \hat{P}_{\D} = b_m + m \ln p_0. 
\label{e:estlogev}
\end{align}

\subsection{Statistical error assessment}

Some comments are in order regarding the statistical error of the results, in
terms of the quality of the posterior samples and the statistical variability of
the log-evidence estimator. Provided that the threshold value of the simulation
level is greater than $b_{\min}$, its conditional samples are always distributed
as the target posterior PDF $p_{\D} (\btheta)$. \added{As MCMC samples they are
correlated, however}. When used for statistical estimation they will give less
information compared to if they were independent. Typically their correlation
tends to increase with the simulation level. In view of this, it is not
necessary to perform more simulation levels than necessary. The stopping 
criterion based on the inner-outer procedure discussed above guards against this scenario.

For the evidence estimate in \eqref{eq:evidence_subset}, it should be noted that
its statistical variability arises from $b_m$. By taking small random
perturbation of the estimation formula, it can be reasoned that
$\text{c.o.v.}(\ln P) \approx \text{std}(\ln P) \approx \text{std}(b_m)$, where
std is an abbreviation for standard deviation. An estimation formula for
the c.o.v. of $b_m$ based on samples in a single SuS run is not available,
however. Conventionally only the c.o.v. of the estimate $\hat{P}_b$ (say) for
$P(Y > b)$ for fixed $b$ is available, rather than the c.o.v. of the $b$
quantile value $b_m$ for fixed exceedance probability. It can be reasoned,
however, that the c.o.v. of $\hat{P}_{\D}$ (where $b_m$ is random) can be
approximated by the c.o.v. of $e^b\hat{P}_b$ for fixed $b$ (then taking $b =
b_m$ obtained in a simulation run). The latter is equal to the c.o.v. of
$\hat{P}_b$, for which standard estimation formula is available
\citep{Au2001,Au2014}.

\subsection{Comparison with original BUS formulation}

Table 1 provides a comparison between the original BUS and the proposed formulation.
Implementing SuS under the proposed framework has several advantages over the
original BUS, stemming mainly from the treatment of the multiplier in the
former. First of all, there is no need to determine the appropriate value of the
multiplier to start the simulation run. The definition of the driving variable
is more intrinsic as it only depends on the likelihood function and not on the
multiplier. In the BUS context, if the chosen value of the multiplier is not
small enough, it will lead to bias in the distribution of the samples,
unfortunately in the high likelihood region of the posterior distribution that
is most important. If it is chosen too small it will result in lower efficiency,
as it requires more simulation levels to reach the target event from which the
samples can be taken as posterior samples. In both cases if it is found after a
SuS run that the choice of the multiplier is not appropriate, one needs to
perform an additional run with a (hopefully) better choice of the multiplier.
These issues are all irrelevant in the proposed context because the problem
specification of the SuS run does not depend on the multiplier.

\begin{table}[H]
\begin{tabular}{lll}
\hline
 & BUS & Proposed \\
\hline
Driving variable &  $ Y = c \L(\btheta) - U $ & $Y = \ln [\L(\btheta)/U]$ \\
& for any $ c < [\max_{\btheta}\L(\btheta)]^{-1}$ &  \\
\\
Target failure event & $F = \{Y > 0\}$ & $F = \{Y > b\}$ \\
& & for any $b>\ln [\max_{\btheta} \L(\btheta)]$ \\
\\
Evidence calculation & $P_{\D} = c P(Y > 0 )$ & $P_{\D} = e^b P(Y > b)$ \\
& & for any $ b > \ln [\max_{\btheta} \L(\btheta)]$\\
\\
Stopping criterion & When threshold value of & \added{After inner-outer SuS procedure} \\
&  simulation level is equal & \added{automatically determines that the} \\
& to zero.& \added{threshold $b_{min}$ has been crossed by} \\
& & \added{driving the sequence $a_k \searrow 0$.}
\end{tabular}
\centering
\caption{Comparison of original BUS and proposed reformulation. Note that the original definition of
the 
driving variable in BUS is $Y = U -  c\L(\btheta)$. For consistency with SuS literature, it has
been 
reexpresed as shown here.}
\end{table}

On the other hand, in the BUS context the posterior samples must be obtained as
those conditional on the target failure event $\{Y > 0\}$ where $Y =
c\L(\btheta) - U $. For example, samples conditional on ${Y > 0.1}$ cannot be
directly used. Since the threshold values ${b_1 ,b_2 ,\ldots}$ generated
adaptively in different simulation levels of SuS are random, they generally do
not coincide with 0 , \ie the target threshold value of interest. In this case,
not all samples can be used directly as conditional samples. In the original BUS
algorithm if the threshold level of the next level
determined adaptively from the samples of the current level is greater than
zero, it is set equal to zero so that the next (and final) level is exactly
conditional on $\{Y > 0\}$. In the proposed context, the posterior samples can
be directly collected from the samples generated in SuS. This is because any
sample conditional on $\{Y > b \}$ with $b>b_{\min}$ $(Y  = \ln
[\L(\btheta)/U])$ can be taken as a posterior sample. The value of
$b_{\min}$ is unknown but whether $b > b_{\min}$ can be determined from the
inner-outer procedure discussed in Section \ref{s:stopping}.
\section{Illustrative examples}

We now present two examples that illustrate the applicability of the proposed
methodology. The first one is the locally identifiable case of a two-degree-of-
freedom shear building model originally presented in \cite{Beck2002}. The
second example is the unidentifiable case of the same model.

\subsection{Example 1. Two-DOF shear frame: locally identifiable case}

Consider a two-storied building structure represented by a two-degree-of-freedom
shear building model. The objective is to identify the interstory
stiffnesses which allow the structural response to be subsequently updated. The
first and second story masses are given by $16.5\times10^3$ kg and $16.1 \times
10^3$ kg respectively. Let $\btheta = [\theta_1 ,\theta_2 ]$ be the stiffness
parameters to be identified. The interstory stiffnesses are thus parameterized
as $k_1 = \theta_1 \overline{k}_1$ and $k_2 = \theta_2 \overline{k}_2$, where
the nominal values for the stiffnesses are given by $k_1 = k_2 =29.7 \times
10^6$ N/m. The joint prior distribution $q(\cdot)$ for $\theta_1$ and $\theta_2$
is assumed to be the product of two Lognormal distributions with most probable
values 1.3 and 0.8 respectively and unit standard deviations. For further
details on the assumptions behind the parameterization and the choice of nominal
values, refer to \cite{Beck2002}. Let $ \D = \{\tilde{f}_1, \tilde{f}_2\}$ be
the modal data used for the model updating, where 3.13 Hz and 9.83 Hz are the
identified natural frequencies. The posterior PDF is formulated following
\cite{Vanik2000} as

\begin{align}
p_{\D} \propto \exp [- J(\btheta) / 2\epsilon^2] \, q(\btheta),
\end{align}

\noindent where $\epsilon$ is the standard deviation of the prediction error and
$J(\btheta)$ is a modal measure-of-fit function given by

\begin{align}
J(\btheta) = \sum_{j = 1}^2 \lambda_j^2 \, [f_j^2(\btheta)/\tilde{f}_j^2 -1 ].
\end{align}

\noindent Here, $\lambda_1$ and $\lambda_2$ are weights and $f_1(\btheta)$ and
$f_2(\btheta)$ are the modal frequencies predicted by the corresponding finite
element model.

For the implementation of SuS, a conventional choice of algorithm parameters in
the reliability literature is adopted in this study. The level probability is
chosen to be $p_0 = 0.1$ and the number of samples per level $N$ is fixed at 10,000.
In the standard Gaussian space, the one-dimensional proposal PDF is chosen to be
uniform with a maximum step width of 1. A relatively large number of samples per
level is been chosen in this study to illustrate the theoretical aspects of the
proposed method. Strategies for efficiency improvement such as adaptive proposal
PDF or likelihood function can be explored but are not further investigated
here. 

Figure \ref{fig:lognorm-space} shows the Markov chain samples for $\btheta =
[\theta_1, \theta_2]$ at six consecutive simulation levels. The results are
shown in the Lognormal space after the application of the relevant
transformation. Level 0 corresponds to the unconditional case (\ie Direct Monte
Carlo), that is, the joint prior PDF. As the simulation level ascends, the
distribution of the samples evolves from the prior distribution to the target
posterior distribution, which is bimodal in the present example.

\begin{figure}[!htp]
\centering
\includegraphics[width=.8\linewidth]{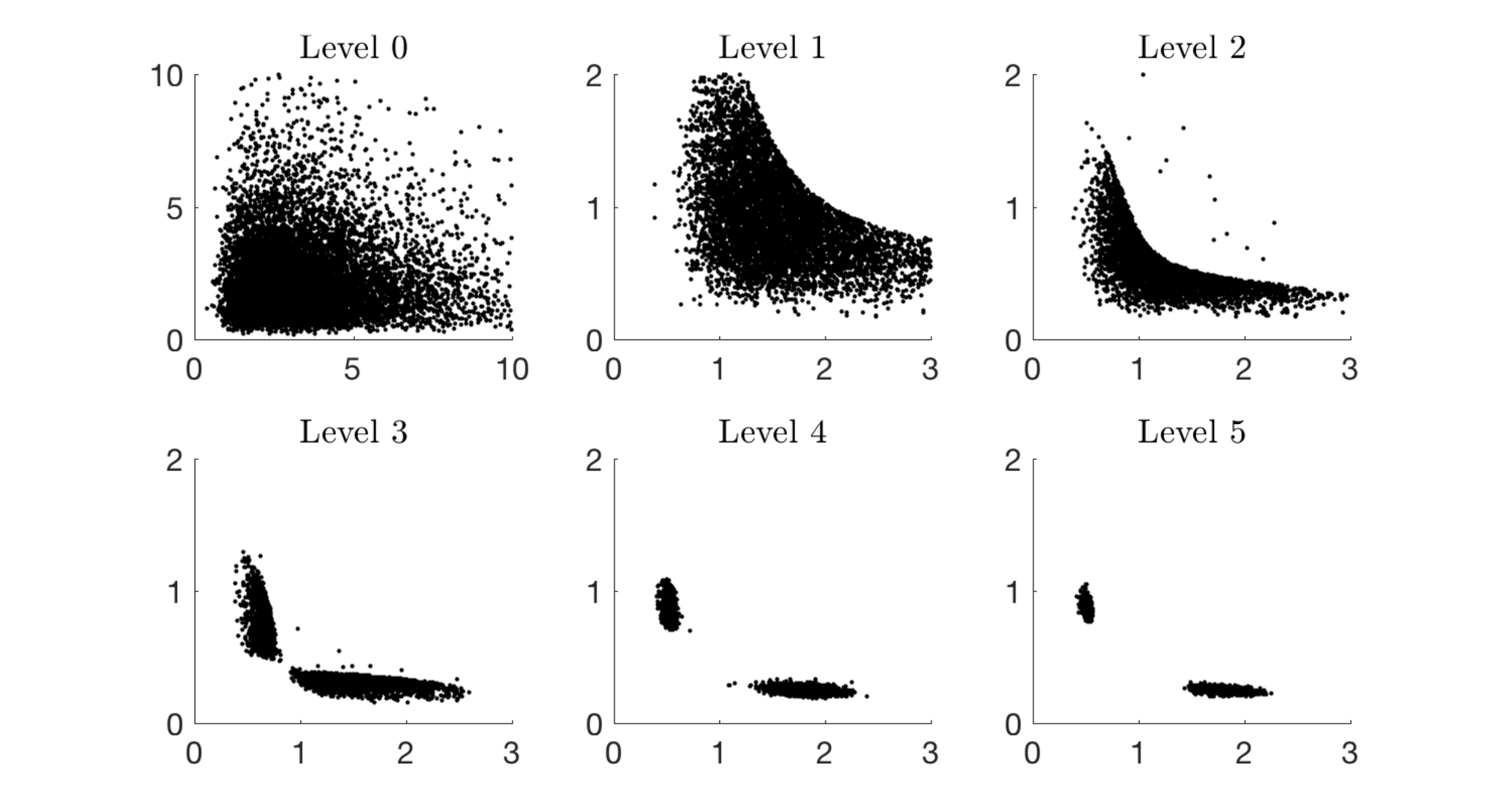}
\caption{Markov chain samples in the Lognormal space for the stiffness
parameters $\theta = [\theta_1 ,\theta_2]$ from Level 0 (prior distribution) to
Level 5.}
\label{fig:lognorm-space}
\end{figure}

Figure \ref{fig:post-samples-02} shows the marginal histograms for $\theta_1$ and
$\theta_2$ corresponding to those samples in Figure \ref{fig:lognorm-space}. For
comparison, the solid lines show the target marginal posterior distributions
obtained by numerically integrating the expression for the posterior PDF, which is still feasible
for this 
two-dimensional example. It is
apparent that the distribution of the samples has settled either in Level 4 or Level 5. In
reality, the exact target PDF is not available and so alternative means must be employed to
determine 
whether the distribution of the samples has settled at the target one. Within the context of the
current
methodology, this is done through the proposed automatic stopping strategy and confirmed by the
plots of the 
log-failure probability and log-evidence versus the threshold level. 

\begin{figure}[htp]
\centering
\includegraphics[width=.8\linewidth]{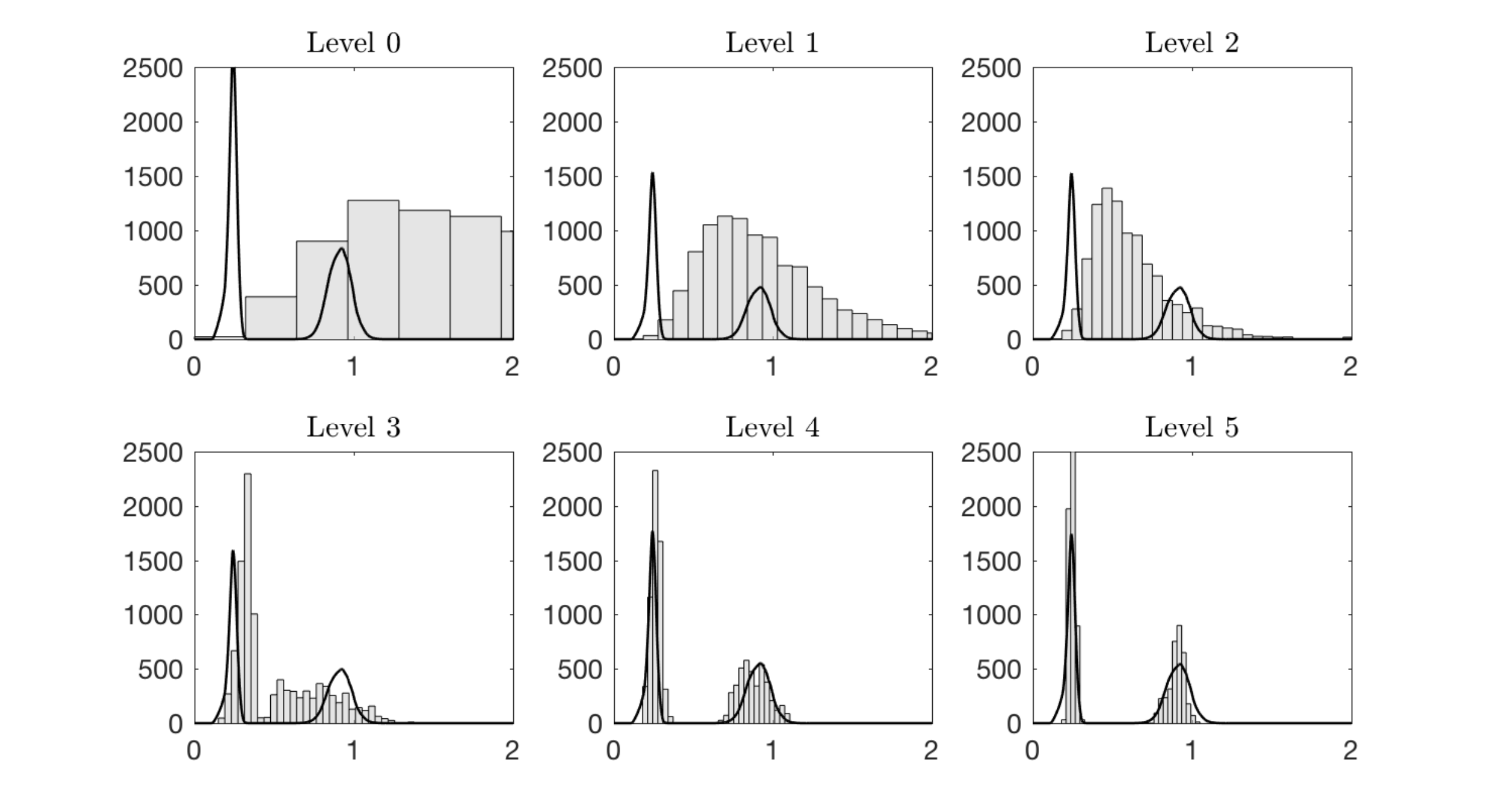}
\caption{Posterior marginal PDF for $\theta_2$ at different simulation levels. The target marginal 
posteriors were obtained numerically and are shown for comparison.}
\label{fig:post-samples-02}
\end{figure}

Figure \ref{subfig:ct-01} plots the estimate of the log-CCDF of $Y$, \ie $\ln
P(Y > b)$ versus $b$. The general shape of the resulting simulated curve
coincides with the characteristic trend predicted by the theory (see Figure \ref{fig:trends}), that
is, there
is a transition from a slowly decreasing function to a line with slope equal to
-1. When zooming into the region where $b > 0$, the figure shows the boundaries
of each level computed via SuS. Additionally, the log-evidence was computed following
\eqref{eq:evidence_function} and is shown in Figure \ref{subfig:ct-01}. As with
the log-CCDF, the theoretical prediction of the characteristic trend is also
verified for this case, whereby the curve flattens when the transition is complete. Table 
\ref{t:evol} shows the evolution of the threshold (columns 2 and 3). The transition is complete
after Level 4, {\color{black}where the probability of inadmissibility $a_k$ 
converges to zero (as defined in Section \ref{s:stopping}). For a tolerance of $a_k = 10^{-8}$, the
fourth column in Table \ref{t:evol} shows that the posterior samples should be collected from Level
5. This corresponds with the clearly bimodal distributions in figures \ref{fig:lognorm-space} and
\ref{fig:post-samples-02}. It is guaranteed that the samples in the subsequent Sus levels would all
be distributed according to the target posterior PDF. However, for statistical estimation their
quality deteriorates as the simulation level ascends because their correlation tends 
to increase. Thus, the algorithm stops in Level 5.}

\begin{figure}[!htp]
\centering
\includegraphics[width=.7\linewidth]{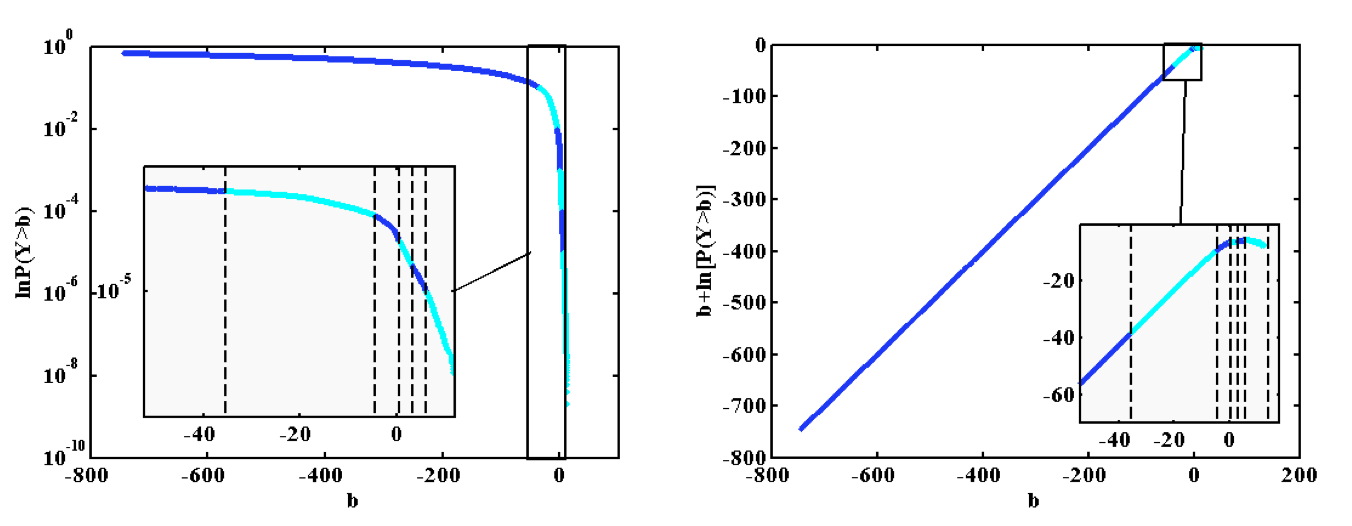} 

\caption{Log-CCDF computed through SuS (left plot) for the identifiable
case. The curve slowly transitions into a straight line with negative unit
slope. Correspondingly, the log-evidence (right plot) flattens as the
threshold exceeds $b_{\min}$. The dotted lines show the thresholds for different
simulation levels.}

\label{subfig:ct-01} 

\end{figure}

\begin{table}[H] 
\begin{tabular}{cccc}
\hline
Level & $b_k$ & $c_k$ &  $a_k$ \\
\hline
  0  & & & \\
  1 & -4.291e+02 & 2.325e+186 & 5.3300e-01 \\
  2 & -6.237e+01 &  1.221e+27 & 1.3800e-01 \\
  3 & -9.331e+00 &  1.128e+04 & 2.8700e-02 \\
  4 & 2.203e+00  & 1.105e-01 & 4.0400e-03 \\
  5 & 5.780e+00  & 3.088e-03 & 0.0000e+00
\end{tabular}
\centering
\caption{Evolution of the threshold and the probability of inadmissibility.}
\label{t:evol}
\end{table}

\subsection{Example 2. Two-DOF shear frame: unidentifiable case}

The exercise was repeated for the case where the story masses are also unknown
and need to be updated. The problem is characterized as unidentifiable, since
there are an infinite number of combinations of parameter values that can
explain the measured modal frequencies. In addition to the stiffnesses, the
masses are parameterized as $m_1 = \theta_3 \overline{m}_1$ and $m_2 = \theta_4
\overline{m}_2$, where the nominal values for the are given by $m_1 = 16.5
\times 10^3$ kg and $m_2 = 16.1 \times 10^3$ kg. Thus, for this case, $\btheta =
[\theta_1, \theta_2, \theta_3, \theta_4]$ where the marginal prior distributions
for $\theta_1$ and $\theta_2$ are the same Lognormals as in the previous
example. The prior marginal distributions for $\theta_3$ and $\theta_4$ are both
assumed to be Lognormals with most probable values equal to 0.95 and standard
deviation of 0.1. The joint prior PDF is therefore taken as the product of the
four Lognormals. Figure \ref{fig:post-samples-03} shows the Markov chain samples
for $\btheta$ ($\theta_1$ versus $\theta_2$ for visualization purposes) at
simulation levels 0 through 5. Again, the updated distribution results in a
bimodal posterior PDF.

\begin{figure}[!htp]
\centering
\includegraphics[width=.72\linewidth]{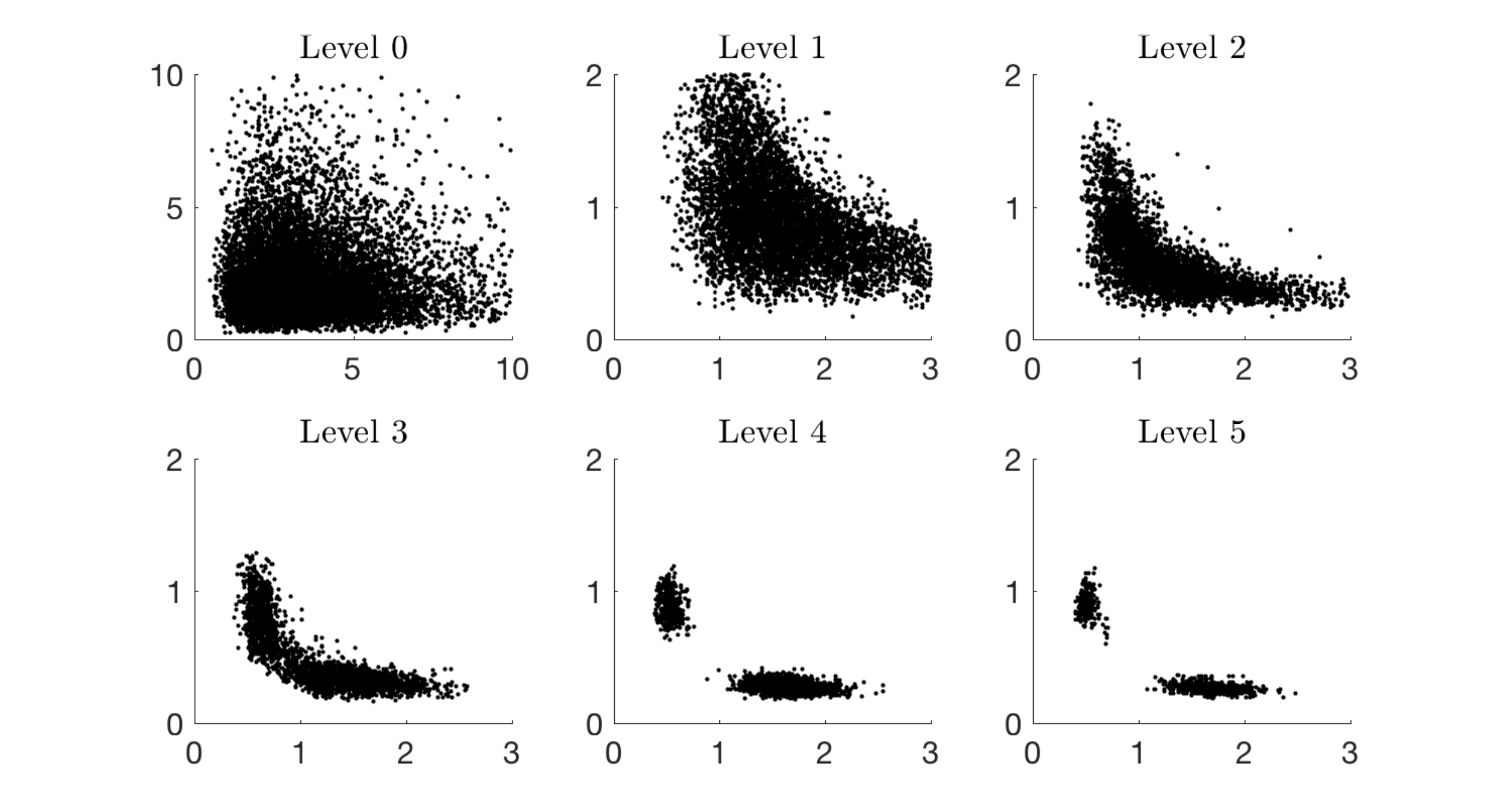}
\caption{Markov chain samples in the Lognormal space for the stiffness parameters $\theta_1$ and 
$\theta_2$ of the unidentifiable case at simulation levels 0 (prior distribution) to level 5.}
\label{fig:post-samples-03}
\end{figure}

Analogously, Figure \ref{fig:post-samples-04} shows the samples for $\theta_3$
and $\theta_4$ in the Lognormal space. There is no noticeable pattern in the
distribution of the masses, consistent with the findings in \cite{Beck2002}. The characteristics of
this 
example are very similar to the ones displayed by the locally identifiable case. The automatic
stopping 
condition is also reached when $a_k \leq 10^{-8}$, for which the posterior samples are also
collected in 
Level 5. We omit the characteristic trend plots and corresponding table for brevity.

\begin{figure}[!htp]
\centering
\includegraphics[width=.72\linewidth]{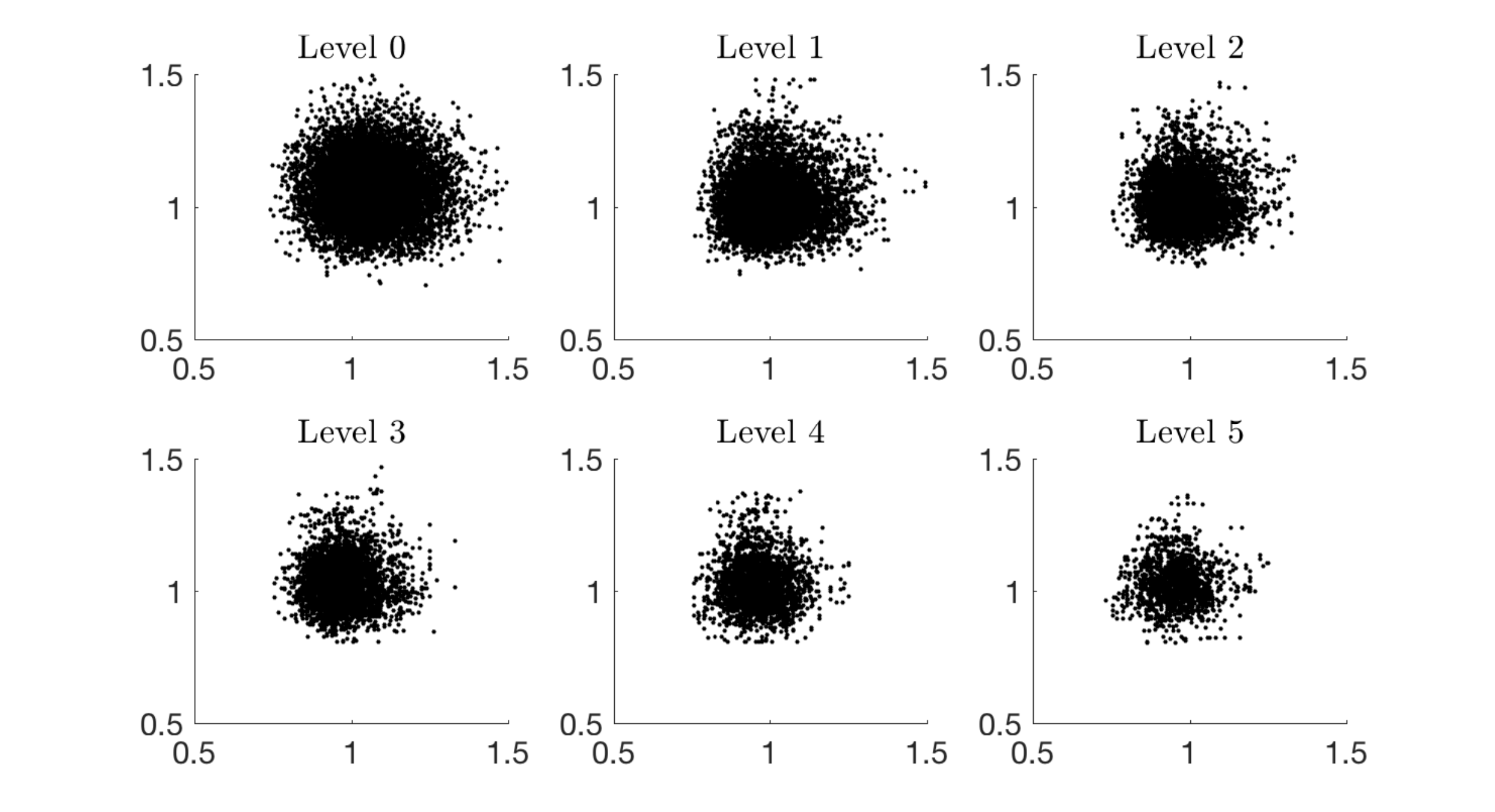}
\caption{Markov chain samples in the Lognormal space for the mass parameters $\theta_3$ and
$\theta_4$ of 
the unidentifiable example at simulation levels 0 to level 5.}
\label{fig:post-samples-04}
\end{figure}

\subsection{Example 3. Model Class Selection}

Following the two preceding examples, we can {\color{black}estimate the log-evidence
corresponding to each model according to equation \eqref{e:estlogev}.
Figure \ref{fig:model_class} shows the ratio of the evidence for the
identifiable case to the evidence of the locally unidentifiable case.}
Discounting the random deviation due to simulation error, the ratio of evidence
seems to converge to 1, which suggests that, given the available data, there is
no reason to prefer the unidentifiable model over the more
parsimonious one.

\begin{figure}[!htp]
\centering
\includegraphics[width=.35\linewidth]{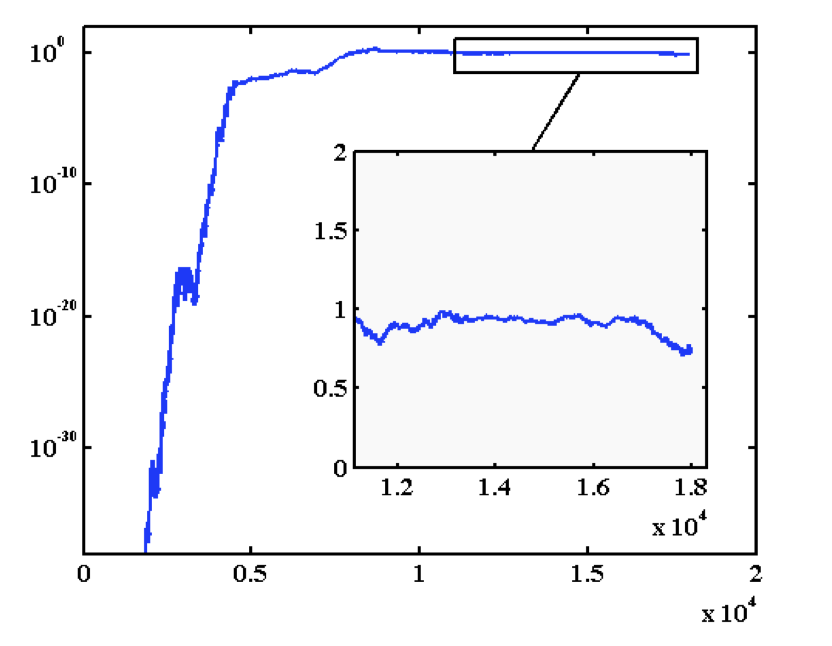}
\caption{{\color{black}Ratio of evidence of the identifiable model to the evidence of the locally 
unidentifiable model. Since this ratio converges to 1, there is no preference of either model over
each 
other, given the available data.}}
\label{fig:model_class}
\end{figure}

\section{Conclusions}

We have presented a fundamental analysis of BUS, a recently proposed framework
that establishes an analogy between the Bayesian updating problem and the
engineering reliability problem. This work was motivated by the question of choosing the
correct likelihood multiplier and it has led to an improved formulation which
resolves this question. By redefining the target failure event, we have
expressed the driving variable in the equivalent reliability problem using the
likelihood function alone, without the multiplier. This redefinition provides
the key advantage over the original BUS, since our implementation no longer
requires a predetermined value for the multiplier in order to start the SuS
runs. This immediately eliminates the need to perform additional runs in case an
inadmissible or inefficient value for the multiplier is chosen. Moreover, it was shown that the
samples generated at different levels of SuS can be used directly as posterior
samples as long as their threshold is greater than the minimum admissible
value and the probability of inadmissibility is zero. We have proposed an inner-outer SuS procedure
that 
provides an automatic stopping condition for the algorithm. The theoretical predictions of our
study have been verified by applying our proposed strategy to illustrative
examples.

\section{Acknowledgements}

This work was performed while the third author was on short-term fellowship
funded by the Japan Society for the Promotion of Science (JSPS) and hosted at
the Tokyo City University, whose supports are gratefully acknowledged. The
second author gratefully acknowledges the Consejo Nacional de Ciencia y
Tecnolog\'{i}a (CONACYT) for the award of a scholarship from the Mexican
government.

\section*{{\color{black}References}}

\end{document}